\documentclass[11pt]{article}

\usepackage{paper-style}
\definecolor{citationPurple}{rgb}{0.4, 0.3, 0.7}
\definecolor{referenceGreen}{rgb}{0.15, 0.55, 0.1}

\hypersetup{
  citecolor=citationPurple,
  linkcolor=referenceGreen
}
\newtheorem*{maintheoremrestated}{\Cref{thm:main}}

\title{Optimal Shallow Circuits for Majority}
\author{%
\begin{tabular}{@{}c@{\hspace{3em}}c@{}}
Victor Lecomte & Prasanna Ramakrishnan \\
{\normalsize Alignment Research Center} &
{\normalsize Harvard University} \\
{\small\texttt{victor@alignment.org}} &
{\small\texttt{pras1712@stanford.edu}}
\end{tabular}%
}

\begin{document}

\maketitle

\begin{abstract}
Four decades on, H{\aa}stad's classical $2^{\Omega(n^{1/(d-1)})}$ lower bound for depth-$d$ circuits computing $\Parity$ remains the best known $\ACz$ circuit lower bound for any explicit function. $\Majority$ has long been a compelling candidate for stronger lower bounds: the most natural circuits computing it are substantially larger than those for $\Parity$ and have repeatedly been conjectured to be optimal.

We present a simple construction, found by GPT-6 Astra, of depth-$d$ circuits of size $2^{O(n^{1/(d - 1)})}$ for any symmetric function. This result settles the asymptotic $\ACz$ circuit complexity of $\Majority$, matching H{\aa}stad's lower bound. The proof draws inspiration from well-loved combinatorial tools, including the color-coding technique of \cite{AYZ1995}.
\end{abstract}

\section{Introduction}

At its core, complexity theory seeks to understand which functions are easy to compute, and which are hard.
Circuits offer a particularly concrete approach to probing the hardness of a function. For example, a function  expressible as a \emph{small} circuit (with few $\AND$ and $\OR$ gates) can be computed efficiently by evaluating the gates one at a time. If a function has \emph{small-depth} circuits, then this computation can be done in a highly parallelizable manner.

The study of circuit complexity is also motivated by an ambitious agenda that strikes at the heart of the most fundamental questions in computer science. If we can understand how to prove \emph{lower bounds} on the size of circuits for a given function, and eventually prove superpolynomial lower bounds for an $\NP$ function (say, $\SAT$), then we would separate $\P$ from $\NP$. Since the combinatorial structure of circuits makes them easier to reason about than Turing machines, this emerged as a promising approach.

The 1980s saw a burst of exciting progress. A sequence of results~\citep{FSS81,Ajt83,Yao85}, culminating in H{\aa}stad's \emph{switching lemma}~\citep{Has86}, established that even the $\Parity$ function  requires superpolynomial circuits of constant depth. In particular, $\Parity$ requires depth-$d$ circuits of size $2^{\Omega(n^{1/(d - 1)})}$, which is tight by a simple divide-and-conquer construction \citep{KPPY84} that splits the input into blocks, computes $\Parity$ on each block, and then computes $\Parity$ on the outputs (see \Cref{fig:parity-expansion}). That we can prove such strong bounds for such a simple function is an encouraging sign, since it should only be easier to prove lower bounds for more complex functions. 

However, that hope is yet to live up to expectations: H{\aa}stad's lower bound for $\Parity$ is still the strongest we have for \emph{any} explicit function, even for depth-3 circuits. As a signal of how far our understanding might have to come, there is still a gap even for the simple $\Majority$ function. The switching lemma establishes the same lower bound as for $\Parity$, but the natural divide-and-conquer upper bound \citep{KPPY84} carries an additional $(\log n)^{1 - 1/(d - 1)}$ factor in the exponent (see \Cref{fig:majority-expansion}). Since closing the gap could be a route to revealing the tools we need to prove better circuit lower bounds, researchers have poured substantial effort into understanding the circuit complexity of $\Majority$, even in restrictive special cases (detailed in \Cref{ssec:related}), without any asymptotic improvement to the lower bounds in the general case.

This paper gives an explanation for why the lower bounds have been so challenging to improve: they are optimal. We show that $\Majority$, and in fact any symmetric function, has depth-$d$ circuits essentially as small as those for $\Parity$.

\begin{theorem}\label{thm:main}
For any constant $d\geq 2$, every symmetric function on $n$ variables has depth-$d$ circuits of size $2^{O(n^{1/(d - 1)})}$.
\end{theorem}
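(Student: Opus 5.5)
The plan is to reduce to threshold functions and then build depth-$d$ circuits for thresholds by a "guess an even split" construction, replacing the expensive enumeration of block-weight vectors with a small family of colorings in the spirit of the color-coding technique of \cite{AYZ1995}. A symmetric $f$ depends only on $|x|$, so
\[
f(x)=\bigvee_{k:\,f\text{ accepts weight }k}\bigl(\mathrm{Th}_{\ge k}(x)\wedge \neg\mathrm{Th}_{\ge k+1}(x)\bigr).
\]
If the threshold circuits have top gate $\OR$ (respectively $\AND$ for the negated ones), this costs only an $n^{O(1)}$ factor. Merging the outermost $\OR$ into those top gates, and using De Morgan to put $\neg\mathrm{Th}$ in the dual form, keeps depth $d$. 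Hence it suffices to show that, for every $k$, both $\mathrm{Th}_{\ge k}$ and its complement have depth-$d$ circuits of size $2^{O(n^{1/(d-1)})}$ with a prescribed top gate. The base case $d=2$ is the trivial DNF/CNF of size $2^{O(n)}$.

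For the inductive step, set $m=n^{1/(d-1)}$. The naive construction fixes a partition into $m$ blocks of size $n/m$, guesses a weight vector $(w_1,\dots,w_m)$ with $\sum w_i=k$, and $\AND$s together the block thresholds. There are about $(n/m)^m=2^{m\log n}$ such guesses, and this is the source of the extra log factor. I would instead use
\[
\mathrm{Th}_{\ge k}(x)=\bigvee_{\chi\in\mathcal F}\ \bigwedge_{i=1}^{m}\mathrm{Th}_{\ge t_i}\bigl(x|_{\chi^{-1}(i)}\bigr),
\]
where $\mathcal F$ is a family of colorings $\chi:[n]\to[m]$ and $(t_1,\dots,t_m)$ is one fixed target vector with $\sum t_i=k$ (e.g.\ $t_i\in\{\lfloor k/m\rfloor,\lceil k/m\rceil\}$). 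Soundness is automatic, since each $\chi$ partitions $[n]$. Completeness needs the splitter property: every $k$-set $S$ is split by some $\chi\in\mathcal F$ with $|S\cap\chi^{-1}(i)|\ge t_i$ for all $i$.

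Each inner threshold lives on a color class of size about $n/m$, or at most a constant multiple of it if $\mathcal F$ is chosen balanced. By induction, applied with $n'=n/m$ and $d'=d-1$ so that $(n')^{1/(d-2)}=m$, it has a depth-$(d-1)$ circuit of size $2^{O(m)}$ with top gate $\OR$. I would then flip the inner circuits to top gate $\AND$ (using the dual induction hypothesis) so that they merge with the middle $\AND$, giving total depth $d$. The total size is $|\mathcal F|\cdot m\cdot 2^{O(m)}$, so the whole problem reduces to constructing $\mathcal F$ with $|\mathcal F|=2^{O(m)}$ and balanced color classes.

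The main obstacle is exactly this splitter family. A uniformly random coloring splits a fixed $k$-set evenly only with probability about $(k/m)^{-m/2}=2^{-\Theta(m\log n)}$, so a naive union-bound family reintroduces the log factor. My first attempt would be to relax the exact even split by allowing constant slack $c$ per block. That means guessing $t_i\in[k/m-c\sqrt{k/m},\,k/m+c\sqrt{k/m}]$ in a hierarchical way, controlling only partial sums along a binary tree over the $m$ colors, so that the number of admissible target vectors and the failure probability per level are both $2^{O(m)}$ after summing over levels (a geometric-type sum, since coarser levels have fewer nodes). If that fails, I would exploit the one-sidedness of $\mathrm{Th}_{\ge}$. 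Only lower bounds on block weights are needed, so it suffices that some coloring gives every block at least $t_i$. The natural tool is then interval-type splittings of a random ordering, combined with color-coding-style hashing that keeps the number of distinct cut patterns at $2^{O(m)}$.
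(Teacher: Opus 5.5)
\textbf{There is a genuine gap, and it sits in the step you flag as the main obstacle.} Your reduction to thresholds and the inductive bookkeeping are fine. (To merge $\mathrm{Th}_{\ge k}\wedge\neg\mathrm{Th}_{\ge k+1}$ into a single $\Sigma_d$ circuit you must write both as $\Sigma_d$ circuits and distribute. That multiplies their sizes rather than costing an $n^{O(1)}$ factor, which is harmless.) The problem is that a splitter family of size $2^{O(m)}$ does not exist, and no choice of colorings, slack, or target vectors can fix this form of test.

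Every test you build is $\bigwedge_i \mathrm{Th}_{\ge t_i}(x|_{B_i})$ for a partition $B_1,\dots,B_m$. Soundness needs $\sum_i t_i\ge k$. An input of weight exactly $k$ then passes only if $\sum_i t_i=k$ and $|x|_{B_i}|=t_i$ for \emph{every} $i$, because a surplus in one block is a deficit in another. So the one-sidedness of thresholds buys nothing at the boundary weight.

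Take $k=n/2$ and blocks of size $n/m$. By log-concavity of binomial coefficients, a single test accepts at most a
\[
\frac{\binom{n/m}{n/(2m)}^{m}}{\binom{n}{n/2}}\approx\Bigl(\frac{\pi n}{2m}\Bigr)^{-m/2}\sqrt{\frac{\pi n}{2}}=2^{-\Theta(m\log(n/m))}
\]
fraction of the weight-$k$ inputs. The same bound holds for any color classes of size $O(n/m)$. By double counting, any covering family of (coloring, target) pairs therefore has $2^{\Omega(m\log(n/m))}=2^{\Omega(m\log n)}$ members when $m=n^{1/(d-1)}$ and $d\ge 3$. That is exactly the log factor you wanted to remove; it is the ``thread the needle'' obstruction the paper identifies in the canonical construction.

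Your fallbacks do not escape this. The hierarchical partial-sum idea avoids the count only by thresholding unions of many blocks. Those live on $\gg n/m$ variables, so the induction no longer gives them $2^{O(m)}$-size depth-$(d-1)$ circuits. The color-coding analogy also breaks: the $k!/k^k\approx e^{-k}$ saving arises only when each color receives a \emph{single} object, whereas you need each color to receive $k/m\gg 1$ elements. Your own estimate of the random-coloring success probability was already the correct diagnosis.

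\textbf{The missing idea is to make tests constrain \emph{pairs} of blocks rather than individual blocks.} The paper works with $\mathbf{1}[|x|=t]$ rather than thresholds, and proceeds as follows.
\begin{enumerate}
\item Split $x$ into $k\approx n^{1/(d-1)}$ blocks with weights $w_i$, and fix shifts $s_i$ with $\sum_i s_i\equiv\binom{k}{2}-t\pmod k$.
\item The test checks that $w_1+s_1,\dots,w_k+s_k$ are pairwise distinct mod $k$. This forces $\sum_i w_i\equiv t\pmod k$ without pinning down any individual $w_i$.
\item Now the blocks are the objects and the residues are the colors, one object per color. So a random admissible shift vector accepts a fixed valid input with probability $k!/k^{k-1}\ge e^{-k}$, and $2^{O(k)}$ random tests suffice.
\item Each constraint $\mathbf{1}[w_i+s_i\not\equiv w_j+s_j\pmod k]$ depends only on $w_i-w_j$. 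It is therefore a symmetric function of block $i$ together with the negated block $j$, and the induction hypothesis gives it a $\Pi_{d-1}$ circuit of size $2^{O((n/k)^{1/(d-2)})}$.
\item Taking the AND over $d-1$ pairwise coprime moduli of size $\Theta(n^{1/(d-1)})$ pins down $|x|$ exactly by the Chinese Remainder Theorem.
\end{enumerate}
This argument needs the induction hypothesis for general symmetric functions (here, modular conditions), not just for thresholds.
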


Together with the lower bound of \cite{Has86}, this result settles the asymptotic small-depth circuit complexity of $\Majority$, resolving a question posed by \cite{HJP93} (on the depth-3 case).

At a very high level, the construction works in two steps:
\begin{itemize}
    \item First, design \emph{tests} certifying that the input has a Hamming weight that the function accepts. 
    \item Then, show that a randomly constructed test has sufficiently high probability of accepting a valid input, so that all inputs that should be accepted are covered by a small collection of tests. The $\OR$ of these tests is then a circuit for the function.
\end{itemize}

The meat of the proof is in the first step. The tests work by splitting the input into $k$ blocks, and comparing the Hamming weights of blocks after random shifts. If the shifts happen to align so that the shifted weights are all different modulo $k$, then we can reverse-engineer the original Hamming weight modulo $k$ in terms of the sum of shifts. We can fix the sum in advance so that a successful test certifies any particular Hamming weight we want modulo $k$, and using a few coprime moduli lets us certify the exact weight. The tests can be implemented inexpensively as small circuits (CNFs in the depth-3 case, $\Pi_{d-1}$ circuits for depth-$d$) since they only need to compare the weights of each pair of blocks, which can be broken down into computations that depend on few inputs (juntas).

\paragraph{Organization.} After covering related work and basic notation and terminology, \Cref{sec:depth-3} proves the upper bound of $2^{O(\sqrt{n})}$ for depth-3 circuits as a warm-up. That section starts with a detailed overview of the canonical divide-and-conquer construction for $\Parity$ and $\Majority$, and details one line of reasoning that suggests the stronger upper bound. Then, \Cref{sec:depth-d} extends the result to any depth by showing how the same idea can be made recursive. The two proofs follow the same template, and have some redundancy for the sake of exposition.

\paragraph{AI disclosure.} The constructions in this paper were originally found by GPT-6 Astra. The initial prompt asked to use the ideas in \cite{LRT22} to improve depth-3 circuit lower bounds for $\Majority$, which it showed was not possible. With a second prompt, Astra generalized the result to any depth. 

Our goal is to present a ``motivated explanation'' of the results geared towards human understanding (following a \href{https://terrytao.wordpress.com/2026/09/18/if-math-is-more-than-proof-we-need-to-better-celebrate-the-rest-of-it/}{recent essay} by Grant Sanderson). To that end, we reconstructed the proofs based on high-level descriptions of the ideas, instead of directly verifying Astra's write-up. The contents of the paper are human written, with additional help from Astra in copy editing, checking correctness, and producing figures.

We recognize that the $\ACz$ circuit complexity of $\Majority$ is a longstanding open problem that many researchers are fond of, and other groups may have independently reached the same results in much the same way. Rather than preparing separate manuscripts, we invite any such groups to join forces with us, and use our shared understanding to improve the explanation.

\subsection{Related work}\label{ssec:related}

Several papers have focused on developing techniques to improve circuit lower bounds for depth 3 in particular, with $\Majority$ as a central focus. Though it may seem like an awfully restrictive model, depth-3 circuits are more powerful than they look. Valiant's depth reduction~\citep{Val77,Val83} converts any linear-size, logarithmic-depth circuit with bounded fan-in into a depth-three circuit of size $2^{O(n/\log\log n)}$. Thus, a depth-three lower bound of $2^{\omega(n/\log\log n)}$ rules out linear-size circuits of logarithmic depth.

\paragraph{Restricted depth-3 circuits.} A natural approach has been to first understand more restricted constructions. Following work on circuits that combine functions of small sets of variables~\citep{HR15,GW20}, \citet{LRT22} showed that the canonical divide-and-conquer construction for $\Majority$ is optimal in a precise sense: computing $\Majority$ from functions that each depend on at most $k$ variables requires $\Omega(\frac{n}{k}\log k)$ such functions, even allowing their inputs to overlap arbitrarily, for $k\leq n^{1-\varepsilon}$ with fixed $\varepsilon>0$. Another line of work limits the fan-in of the bottom gates. Here, a central question is how many inputs of weight $t$ a $k$-CNF can accept while rejecting every lighter input: an upper bound on this number gives a lower bound on how many CNFs are needed to compute the corresponding threshold function. \citet{GKP24} resolve this question for $k=2$ and propose bounds for larger $k$ that would imply a $2^{\Omega(\sqrt{n\log n})}$ depth-3 lower bound for $\Majority$ (ruled out by \Cref{thm:main}). For bottom fan-in three, \citet{GPPST24} proved a $1.251^n$ lower bound for $\Majority$, and \citet{GKPRST26} obtained an optimal bound in the monotone version of this model. 

These results identify obstacles that any improved construction must take into account. Our depth-3 construction still uses local computations, but organizes their answers into CNF tests that can be combined efficiently with just an $\OR$. It also uses non-monotone circuits and allows the bottom fan-in to grow with $n$. 

The search for an explicit function requiring depth-3 circuits of size $2^{\omega(\sqrt n)}$ remains open, and \Cref{thm:main} suggests that it may require more complicated functions.

\paragraph{Top-down lower bounds.} The lower bounds derived via the switching lemma are often thought of as ``bottom-up'': starting at the input layer, they argue that a random restriction which fixes most variables often retains much of the complexity of the original function (e.g., a restriction of $\Parity$ is still a $\Parity$), but simplifies a small circuit enough to reduce its depth (in particular, turning the CNFs at the bottom two layers into DNFs, or vice versa).
\cite{HJP93} showed that a completely different ``top-down'' approach, reasoning about the subcircuits just below the output gate, can establish comparable lower bounds. 
They proved lower bounds of $2^{0.618\sqrt n}$ for $\Parity$ and $2^{0.849\sqrt n}$ for $\Majority$. The top-down viewpoint has been further developed using information-theoretic methods~\citep{MW19} and extended to depth-4 circuits~\citep{GRSS23,WL26}, and very recently to depth-$d$ for any constant $d$~\citep{KardesRossman2026}.

\paragraph{Top-down upper bounds.} In a similar spirit, the proof of \Cref{thm:main} can be viewed as a top-down \emph{upper bound}. For depth 3, we view the circuit as an $\OR$ of CNFs, and each CNF is a test used to certify that the input has a Hamming weight that should be accepted.\footnote{Along similar lines, \citet{Hirahara2017} proves a general connection between one-sided CNF tests like these and depth-3 formulas.}
Top-down approaches like these are natural ways to construct small-depth circuits, and we mention a couple of papers that share similarities with our construction. 

\citet{Amano2023} uses computer search to find CNFs on a constant number of variables that certify $\Majority$ on many inputs. He combines these on disjoint blocks and uses a probabilistic argument to cover every valid input (akin to the last step in our proof). His construction improves upper bounds for $\Majority$ in depth-3 circuits with fixed small bounds on the number of positive literals in each bottom gate.
We also note that an earlier paper of \citet{Wol06} claimed a $2^{O(\sqrt n)}$ depth-3 upper bound for symmetric functions, but its proof has a gap (discussed by \citet{LRT22} and \citet{GKP24}). The proposed construction similarly combines small CNF tests through a probabilistic covering argument, but uses a different family of tests.

\section{Preliminaries}
\label{sec:preliminaries}

\paragraph{Functions.} A Boolean function on $n$ variables maps the hypercube $\{0, 1\}^n$ to $\{0, 1\}$. For $x \in \{0, 1\}^n$, $|x|$ denotes its \emph{Hamming weight} (or simply \emph{weight}), the number of coordinates of $x$ set to $1$.

A Boolean function is \emph{symmetric} if its output depends only on the weight of its input. For example, the $\Majority$ function is $\textbf{1}[|x| \geq n/2]$, and the $\Parity$ function is $|x| \pmod{2}$. In general, any symmetric function can be written as
\begin{equation}\label{eq:decomp}
 \bigvee_{t \in S} \textbf{1}[|x| = t]
\end{equation}
for a set $S \subseteq \{0, 1, \dots, n\}$. This decomposition allows us to focus on constructing circuits for $\textbf{1}[|x| = t]$ for fixed $t$, often denoted $\mathsf{EXACT}_t(x)$ in the literature.

Finally, a \emph{junta} is a Boolean function that depends on few variables. In particular, a \emph{$k$-junta} depends on at most $k$ variables.

\paragraph{Circuits.} A circuit is a directed acyclic graph with input nodes labeled by literals $x_i$ or $\neg x_i$, along with $\AND$ ($\wedge$) and $\OR$ ($\vee$) gates of arbitrary fan-in. We view circuits as alternating layers of $\AND$ and $\OR$ gates, with a single output gate at the top. The \emph{depth} is the number of gate layers, and the \emph{size} is the number of $\AND$ and $\OR$ gates. A depth-$d$ circuit whose output gate is an $\OR$ (resp. $\AND$) is called a $\Sigma_d$ (resp. $\Pi_d$) circuit  (see \Cref{fig:circuit-example}).\footnote{The notation comes from the convention of viewing $\OR$ as Boolean sum and $\AND$ as Boolean product.} We always treat $d$ as a fixed constant, so constants hidden by big-$O$ notation may depend on $d$. 

\begin{figure}[htbp]
\centering
\input{figures/circuit-negations-symbols}
\caption{A $\Sigma_3$ circuit on $n = 16$ input bits. Green (resp. white) inputs and gates have value $1$ (resp. 0). The edges are \emph{wires} carrying values upward. For convenience, a circled $\lnot$ negates the value on that wire, but in reality we would have separate input nodes for each $x_i$ and $\lnot x_i$.}
\label{fig:circuit-example}
\end{figure}

By De Morgan's laws, the negation of a $\Sigma_d$ circuit is a $\Pi_d$ circuit of the same size, and vice versa. Since the negation of a symmetric function is also symmetric, it suffices to prove our upper bounds for one of these two kinds of circuits. We focus on $\Sigma_d$ circuits, whose output $\OR$ lets us combine the circuits in the decomposition by weight in \Cref{eq:decomp} without increasing the depth.

A $\Pi_2$ circuit is a CNF (conjunctive normal form), and a $\Sigma_2$ circuit is a DNF (disjunctive normal form). The size of such a circuit is its number of clauses (for a CNF) or terms (for a DNF), plus one for the output gate. For convenience, we call the number of clauses or terms its size.

A useful fact is that every Boolean function on $n$ variables can be represented as both a CNF and a DNF of size at most $2^n$. For a DNF, take the $\OR$ over all satisfying assignments, with each assignment represented by an $\AND$ of the literals specifying it. The CNF representation follows by applying this construction to the negation of the function and using De Morgan's laws. We repeatedly use the helpful corollary that every $k$-junta can be represented as both a CNF and a DNF of size at most $2^k$.

\section{Depth 3}
\label{sec:depth-3}

Since deeper circuits are harder to build intuition for, we will warm up with the simplest nontrivial case: depth 3. To motivate the construction, we will start by explaining the canonical constructions of small depth-3 circuits for $\Parity$ and $\Majority$, and the change in perspective that suggests a stronger construction. This explanation is intentionally informal, taking liberties with rounding, inconsequential constants, and lower order terms.

The typical approach is to divide and conquer. For example, we can compute the $\Parity$ function on $n$ bits by splitting the input into $\sqrt{n}$ blocks of size $\sqrt{n}$, computing $\Parity$ on each block, and then computing $\Parity$ on the outputs. Each $\Parity$ function is a $\sqrt{n}$-junta, which can expand into a CNF or DNF of size $2^{\sqrt{n}}$ (see \Cref{fig:parity-expansion}). Expanding the outer $\Parity$ into a DNF and each inner $\Parity$ into a CNF, we can collapse the middle $\AND$ layers to get a depth-3 ($\Sigma_3$) circuit of size $2^{O(\sqrt{n})}$.\footnote{We gloss over a slight subtlety: the inner $\Parity$ functions which become inputs to the outer $\Parity$ may need to be negated, preventing us from collapsing the layers. To address this, we really replace each inner $\Parity$ with \emph{two} CNFs, one computing $\Parity$ and another for its negation.} 

\begin{figure}[htbp]
\centering
\begingroup%
\def\parityBlocks{{{1,0,0},{1,1,0},{0,1,1}}}%
\def\parityEvenAssignments{{{0,0,0},{0,1,1},{1,0,1},{1,1,0}}}%
\def\parityOddAssignments{{{0,1,0},{1,0,0},{0,0,1},{1,1,1}}}%
\def\parityShownEdges{1/1,1/2,2/2,2/4,3/4}%
\definecolor{parityGreen}{HTML}{D6EBD8}%
\definecolor{parityInk}{HTML}{34423C}%
\definecolor{parityWire}{HTML}{87928B}%
\pgfmathsetlengthmacro{\parityUnit}{\linewidth/160}%
\def\parityHeightDivisor{4}%
\pgfmathsetlengthmacro{\parityVerticalUnit}{\linewidth/\parityHeightDivisor/40}%
\def\parityModuleShift{2.3}%
\def\parityOutlineShift{0.4}%
\begin{tikzpicture}[
  x=\parityUnit,
  y=\parityVerticalUnit,
  font=\fontsize{11}{13}\selectfont,
  parity/node/.style={draw=parityInk, text=parityInk, line width=0.55pt, inner sep=0pt, outer sep=0pt, fill=white},
  parity/input/.style={parity/node, rectangle, minimum size=4.8mm},
  parity/gate/.style={parity/node, circle, minimum size=5.4mm},
  parity/right gate/.style={parity/gate, minimum size=3.4mm, line width=0.5pt, font=\fontsize{9}{10.5}\selectfont},
  parity/small gate/.style={parity/gate, minimum size=3mm, line width=0.45pt, font=\fontsize{8.5}{10}\selectfont},
  parity/inversion/.style={parity/node, circle, minimum size=2.6mm, line width=0.45pt, font=\fontsize{7}{8}\selectfont},
  parity/state/.is choice,
  parity/state/0/.style={fill=white},
  parity/state/1/.style={fill=parityGreen},
  parity/wire/.style={draw=parityWire, line width=0.35pt, -{Stealth[length=1.1mm,width=0.8mm]}, shorten <=0.3pt, shorten >=0.3pt},
  parity/triangle/.style={draw=parityWire!45, fill=parityGreen!18!white, line width=0.45pt, rounded corners=2.2mm, line join=round},
  parity/label/.style={text=parityInk, font=\fontsize{9}{11}\selectfont, inner sep=0pt},
  parity/ellipsis/.style={text=parityInk, font=\fontsize{11}{13}\selectfont, inner sep=0pt},
  declare function={
    paritybit(\b,\i)=\parityBlocks[\b-1][\i-1];
    blockparity(\b)=mod(paritybit(\b,1)+paritybit(\b,2)+paritybit(\b,3),2);
    parityclause(\b,\j)=max(abs(paritybit(\b,1)-\parityEvenAssignments[\j-1][0]),abs(paritybit(\b,2)-\parityEvenAssignments[\j-1][1]),abs(paritybit(\b,3)-\parityEvenAssignments[\j-1][2]));
    parityterm(\j)=(1-abs(blockparity(1)-\parityOddAssignments[\j-1][0]))*(1-abs(blockparity(2)-\parityOddAssignments[\j-1][1]))*(1-abs(blockparity(3)-\parityOddAssignments[\j-1][2]));
  }
]
  \path[use as bounding box] (0,0) rectangle (160,40);

  \foreach \b in {1,2,3} {
    \pgfmathsetmacro{\leftCenter}{8.85+28.15*(\b-1)}
    \pgfmathsetmacro{\rightCenter}{98+25*(\b-1)}
    \foreach \i in {1,2,3} {
      \pgfmathtruncatemacro{\value}{paritybit(\b,\i)}
      \node[parity/input, parity/state=\value] (left-input-\b-\i) at ({\leftCenter+6.2*(\i-2)},2.8) {$\value$};
      \node[parity/input, parity/state=\value] (right-input-\b-\i) at ({\rightCenter+6.2*(\i-2)},2.8) {$\value$};
    }
    \pgfmathtruncatemacro{\value}{blockparity(\b)}
    \node[parity/gate, parity/state=\value] (left-parity-\b) at (\leftCenter,17.3) {$\oplus$};
    \node[parity/right gate, parity/state=\value] (cnf-output-\b) at (\rightCenter,{17.3+\parityModuleShift}) {$\wedge$};
    \foreach \j in {1,2,3,4} {
      \pgfmathtruncatemacro{\value}{parityclause(\b,\j)}
      \ifnum\j=3
        \node[parity/ellipsis] at ({\rightCenter+(12.4/3)*(\j-2.5)},{12.5+\parityModuleShift}) {$\cdots$};
      \else
        \node[parity/small gate, parity/state=\value] (cnf-clause-\b-\j) at ({\rightCenter+(12.4/3)*(\j-2.5)},{12.5+\parityModuleShift}) {$\vee$};
      \fi
    }
  }

  \pgfmathtruncatemacro{\parityOutput}{mod(blockparity(1)+blockparity(2)+blockparity(3),2)}
  \node[parity/gate, parity/state=\parityOutput] (left-output) at (37,34.5) {$\oplus$};
  \node[parity/right gate, parity/state=\parityOutput] (right-output) at (123,{34.5+\parityModuleShift}) {$\vee$};
  \foreach \j in {1,2,3,4} {
    \pgfmathtruncatemacro{\value}{parityterm(\j)}
    \ifnum\j=3
      \node[parity/ellipsis] at ({123+(12.4/3)*(\j-2.5)},{29.7+\parityModuleShift}) {$\cdots$};
    \else
      \node[parity/small gate, parity/state=\value] (dnf-term-\j) at ({123+(12.4/3)*(\j-2.5)},{29.7+\parityModuleShift}) {$\wedge$};
    \fi
  }

  \node[parity/label, anchor=west] at (127.6,{35.7+\parityModuleShift}) {DNF};
  \node[parity/label, anchor=west] at (152.6,{18.5+\parityModuleShift}) {CNF};

  \node[parity/label] (parity-expand-label) at (80,26.5) {expand};
  \draw[draw=parityInk, line width=0.65pt, -{Stealth[length=2mm,width=1.4mm]}, shorten <=-1mm, shorten >=-1mm] (parity-expand-label.west |- 0,23) -- (parity-expand-label.east |- 0,23);

  \begin{scope}[on background layer]
    \path[parity/triangle] (111.25,{27.2+\parityModuleShift+\parityOutlineShift}) -- (134.75,{27.2+\parityModuleShift+\parityOutlineShift}) -- (123,{37.2+\parityModuleShift+\parityOutlineShift}) -- cycle;
    \foreach \b in {1,2,3} {
      \pgfmathsetmacro{\rightCenter}{98+25*(\b-1)}
      \path[parity/triangle] ({\rightCenter-11.75},{10+\parityModuleShift+\parityOutlineShift}) -- ({\rightCenter+11.75},{10+\parityModuleShift+\parityOutlineShift}) -- (\rightCenter,{20+\parityModuleShift+\parityOutlineShift}) -- cycle;
    }
    \foreach \b in {1,2,3} {
      \foreach \i in {1,2,3} {
        \draw[parity/wire] (left-input-\b-\i) -- (left-parity-\b);
      }
      \foreach \i/\j in \parityShownEdges {
        \draw[parity/wire] (right-input-\b-\i.north) -- (cnf-clause-\b-\j);
      }
      \draw[parity/wire] (left-parity-\b) -- (left-output);
      \foreach \j in {1,2,4} {
        \draw[parity/wire] (cnf-clause-\b-\j) -- (cnf-output-\b);
      }
    }
    \foreach \b/\j in \parityShownEdges {
      \draw[parity/wire] (cnf-output-\b) -- (dnf-term-\j);
    }
    \foreach \j in {1,2,4} {
      \draw[parity/wire] (dnf-term-\j) -- (right-output);
    }
  \end{scope}
  \foreach \b in {1,2,3} {
    \path (right-input-\b-2.north) -- (cnf-clause-\b-2.center) node[pos=0.43, parity/inversion] {$\lnot$};
  }
  \path (cnf-output-2.center) -- (dnf-term-2.center) node[pos=0.43, parity/inversion] {$\lnot$};
\end{tikzpicture}%
\endgroup%
\caption{Computing $\Parity$ with divide-and-conquer, and expanding into a small $\Sigma_3$ circuit (after collapsing the adjacent $\AND$ layers). We use $\oplus$ to label a parity gate.}
\label{fig:parity-expansion}
\end{figure}

Almost the same approach works for $\Majority$. Again, we can split the input into $k$ blocks of size $n/k$. However, we cannot simply compute $\mathsf{Majority}$ as a majority of majorities.\footnote{This pesky fact leads to a phenomenon called \emph{gerrymandering}.} Instead, we can compute the Hamming weight of each block, and
check whether the sum of the weights is at least $n/2$. Each bit of a block's weight is an $n/k$-junta, and since each block's weight uses $\log(n/k)$ bits, the outer function depends on $k \log(n/k)$ such bits. Expanding these juntas as before, the outer function contributes $2^{k \log (n/k)}$ gates and each inner junta contributes $2^{O(n/k)}$ gates. Parameters balance at $k \approx \sqrt{n/\log n}$, giving a depth-3 circuit of size $2^{O(\sqrt{n \log n})}$.

If we believe that divide-and-conquer is the right approach, then this construction is abundantly natural.\footnote{Indeed, \citet{LRT22} show that it is essentially the best divide-and-conquer construction.} 
However, we can view the construction in a very different way if we look at the final depth-3 circuit we get after expansion as an $\OR$ of CNFs.  
We can think of each CNF as a \emph{test} certifying that the input satisfies $\Majority$. With some effort, one can decode the form of these tests. Each test sets a threshold for each block, and checks that each block's weight is at least its threshold.
 The sum of thresholds is at least $n/2$, so an input passes a test only if it satisfies $\Majority$. Taking all such tests covers every valid input, and there are roughly $(n/k)^k = 2^{k\log(n/k)}$ of them. Each test is a CNF of size roughly $2^{n/k}$ since it is an $\AND$ over $n/k$-juntas, and so we get the same size bound as before after setting $k \approx \sqrt{n/\log n}$.

\begin{figure}[htbp]
\centering
\begingroup%
\def\majorityBits{{0,1,0,0,1,1,0,0,1,1}}%
\def\majorityClauseInputs{{{-1,2,3},{1,2,3},{1,2,-3},{2,3,0},{4,5,6},{4,5,-6},{7,8,9},{7,0,0},{7,8,-9}}}%
\def\majorityTests{{{3,2,0},{2,2,1},{1,2,2},{0,2,3}}}%
\def\majorityClausePositions{{91.8,99,106.2,115,125,133,140.5,147,157.5}}%
\def\majorityTestPositions{{99,117,135,154}}%
\def\majorityTestEdges{1/1,2/1,3/1,4/2,5/2,6/2,5/3,6/3,7/3,9/3,6/4,8/4,9/4}%
\definecolor{majorityGreen}{HTML}{D6EBD8}%
\definecolor{majorityInk}{HTML}{34423C}%
\definecolor{majorityWire}{HTML}{87928B}%
\pgfmathsetlengthmacro{\majorityUnit}{\linewidth/160}%
\def\majorityHeightDivisor{4}%
\pgfmathsetlengthmacro{\majorityVerticalUnit}{\linewidth/\majorityHeightDivisor/40}%
\pgfmathsetmacro{\majorityLayerGap}{(37.3-2.8-11.9mm/\majorityVerticalUnit)/3}%
\pgfmathsetmacro{\majorityClauseY}{2.8+4.2mm/\majorityVerticalUnit+\majorityLayerGap}%
\pgfmathsetmacro{\majorityTestY}{\majorityClauseY+3.7mm/\majorityVerticalUnit+\majorityLayerGap}%
\begin{tikzpicture}[
  x=\majorityUnit,
  y=\majorityVerticalUnit,
  font=\fontsize{11}{13}\selectfont,
  majority/node/.style={draw=majorityInk, text=majorityInk, line width=0.55pt, inner sep=0pt, outer sep=0pt, fill=white},
  majority/input/.style={majority/node, rectangle, minimum size=4.8mm},
  majority/gate/.style={majority/node, circle, minimum size=3.8mm, line width=0.5pt, font=\fontsize{9.5}{11}\selectfont},
  majority/clause/.style={majority/gate, minimum size=3.6mm},
  majority/output/.style={majority/gate, minimum size=4.2mm},
  majority/inversion/.style={majority/node, circle, minimum size=2.6mm, line width=0.45pt, font=\fontsize{7}{8}\selectfont},
  majority/weight box/.style={majority/node, rectangle, rounded corners=1.5mm, minimum width=13.4mm, minimum height=10.8mm},
  majority/sum box/.style={majority/weight box, minimum width=17mm},
  majority/state/.is choice,
  majority/state/0/.style={fill=white},
  majority/state/1/.style={fill=majorityGreen},
  majority/wire/.style={draw=majorityWire, line width=0.35pt, -{Stealth[length=1.1mm,width=0.8mm]}, shorten <=0.3pt, shorten >=0.3pt},
  majority/triangle/.style={draw=majorityWire!45, fill=majorityGreen!18!white, line width=0.45pt, rounded corners=2.2mm, line join=round},
  majority/label/.style={text=majorityInk, font=\fontsize{9}{11}\selectfont, inner sep=0pt},
  majority/module label/.style={majority/label, text=majorityInk!72!white, font=\fontsize{8}{9.5}\selectfont},
  majority/ellipsis/.style={text=majorityInk, font=\fontsize{11}{13}\selectfont, inner sep=0pt},
  declare function={
    majoritybit(\i)=\majorityBits[\i];
    majorityliteral(\i)=abs(majoritybit(abs(\i))-(\i<0));
    majorityweight(\b)=majoritybit(3*\b-2)+majoritybit(3*\b-1)+majoritybit(3*\b);
    majorityclause(\j)=max(majorityliteral(\majorityClauseInputs[\j-1][0]),majorityliteral(\majorityClauseInputs[\j-1][1]),majorityliteral(\majorityClauseInputs[\j-1][2]));
    majoritytest(\j)=(majorityweight(1)>=\majorityTests[\j-1][0])*(majorityweight(2)>=\majorityTests[\j-1][1])*(majorityweight(3)>=\majorityTests[\j-1][2]);
  }
]
  \path[use as bounding box] (0,0) rectangle (160,40);

  \foreach \b in {1,2,3} {
    \pgfmathsetmacro{\leftCenter}{8.85+28.15*(\b-1)}
    \pgfmathsetmacro{\rightCenter}{98+25*(\b-1)}
    \foreach \i in {1,2,3} {
      \pgfmathtruncatemacro{\index}{3*(\b-1)+\i}
      \pgfmathtruncatemacro{\value}{majoritybit(\index)}
      \node[majority/input, majority/state=\value] (majority-left-input-\index) at ({\leftCenter+6.2*(\i-2)},2.8) {$\value$};
      \node[majority/input, majority/state=\value] (majority-right-input-\index) at ({\rightCenter+6.2*(\i-2)},2.8) {$\value$};
    }
    \node[majority/weight box] (majority-weight-\b) at (\leftCenter,17.2) {};
    \foreach \j in {1,2} {
      \pgfmathtruncatemacro{\value}{mod(floor(majorityweight(\b)/pow(2,2-\j)),2)}
      \node[majority/gate, majority/state=\value] (majority-weight-bit-\b-\j) at ({\leftCenter+5.2*(\j-1.5)},19.3) {$\value$};
    }
    \node[majority/module label] at (\leftCenter,14.8) {weight};
  }

  \pgfmathtruncatemacro{\majorityOutput}{majorityweight(1)+majorityweight(2)+majorityweight(3)>=5}
  \node[majority/sum box] (majority-sum) at (37,34.1) {};
  \node[majority/gate, majority/state=\majorityOutput] at (37,36.5) {$\majorityOutput$};
  \node[majority/module label] at (37,31.65) {$\mathrm{sum}\geq n/2$};

  \foreach \j in {1,...,9} {
    \pgfmathtruncatemacro{\value}{majorityclause(\j)}
    \node[majority/clause, majority/state=\value] (majority-clause-\j) at ({\majorityClausePositions[\j-1]},\majorityClauseY) {$\vee$};
  }
  \foreach \j in {1,...,4} {
    \pgfmathtruncatemacro{\value}{majoritytest(\j)}
    \node[majority/gate, majority/state=\value] (majority-test-\j) at ({\majorityTestPositions[\j-1]},\majorityTestY) {$\wedge$};
  }
  \node[majority/output, majority/state=\majorityOutput] (majority-output) at (123,37.3) {$\vee$};
  \node[majority/ellipsis] at (152,\majorityClauseY) {$\cdots$};
  \node[majority/ellipsis] at (144.5,\majorityTestY) {$\cdots$};
  \node[majority/label] at (99,{\majorityTestY+6.7}) {test CNF};

  \node[majority/label] (majority-expand-label) at (80,26.5) {expand};
  \draw[draw=majorityInk, line width=0.65pt, -{Stealth[length=2mm,width=1.4mm]}, shorten <=-1mm, shorten >=-1mm] (majority-expand-label.west |- 0,23) -- (majority-expand-label.east |- 0,23);

  \begin{scope}[on background layer]
    \path[majority/triangle] (85.5,{\majorityClauseY-3.4}) -- (112.5,{\majorityClauseY-3.4}) -- (99,{\majorityTestY+4.6}) -- cycle;
    \foreach \b in {1,2,3} {
      \foreach \i in {1,2,3} {
        \pgfmathtruncatemacro{\index}{3*(\b-1)+\i}
        \draw[majority/wire] (majority-left-input-\index.north) -- (majority-weight-\b.south);
      }
    }
    \foreach \j in {1,...,9} {
      \foreach \i in {0,1,2} {
        \pgfmathtruncatemacro{\source}{abs(\majorityClauseInputs[\j-1][\i])}
        \ifnum\source>0
          \pgfmathtruncatemacro{\showWire}{!((\j==4 && \source==2)||(\j==3 && \source==1)||(\j==1 && \source==2)||(\j==1 && \source==3))}
          \ifnum\showWire=1
            \draw[majority/wire] (majority-right-input-\source.north) -- (majority-clause-\j.south);
          \fi
        \fi
      }
    }
    \foreach \clause/\test in \majorityTestEdges {
      \draw[majority/wire] (majority-clause-\clause.north) -- (majority-test-\test.south);
    }
    \foreach \j in {1,...,4} {
      \draw[majority/wire] (majority-test-\j.north) -- (majority-output.south);
    }
  \end{scope}
  \foreach \b in {1,2,3} {
    \foreach \j in {1,2} {
      \draw[majority/wire] (majority-weight-bit-\b-\j.north) -- (majority-sum.south);
    }
  }
  \foreach \j in {1,...,9} {
    \foreach \i in {0,1,2} {
      \pgfmathtruncatemacro{\literal}{\majorityClauseInputs[\j-1][\i]}
      \ifnum\literal<0
        \pgfmathtruncatemacro{\source}{abs(\literal)}
        \path (majority-right-input-\source.north) -- (majority-clause-\j.south) node[pos=0.55, majority/inversion] {$\lnot$};
      \fi
    }
  }
\end{tikzpicture}%
\endgroup%
\caption{Computing $\Majority$ with divide-and-conquer, and expanding into an $\OR$ of CNF tests.}
\label{fig:majority-expansion}
\end{figure}

Our construction follows the same template: design tests that are expressible as small CNFs ($\AND$s of juntas), then ensure every valid input is covered with a small collection of tests. 
To see where we might improve, consider how the previous tests cover inputs of weight exactly $n/2$. 
If such an input passes a test, every block must meet its threshold \emph{exactly}: if one block exceeds the threshold, another falls short. While the test does certify that the total weight is valid, it only does so for a thin slice of inputs that manage to thread the needle. This restrictiveness forces us to have many tests. The core challenge is in designing tests that certify the Hamming weight, while accepting many valid inputs.

Let us focus on certifying that the input $x$ has some particular weight $t$. This is enough for any symmetric function, since we can take the $\OR$ over the weights it accepts. We can simplify the goal further by first certifying that $|x| \equiv t \pmod{m}$ for some modulus $m$. By the Chinese Remainder Theorem, congruences for two coprime moduli at least $\sqrt{n}$ determine the weight exactly, and we can combine their CNFs with an $\AND$ without increasing the depth or affecting the size asymptotically.

To certify the total weight modulo $m$ without fixing every block's weight, we have to go beyond checking one block at a time.
A condition on a \emph{few} blocks still depends on few input variables, so it also has a small CNF. We can therefore look to impose local constraints between the weights of different blocks, and hope that these constraints certify some particular weight.

A simple idea does the trick: set $m$ to be exactly $k$ (the number of blocks), and take the constraint that the block weights are \emph{distinct} modulo $k$. Distinctness can be checked by comparing each pair of blocks, so the constraint is an $\AND$ of juntas --- a small CNF. The sum of weights modulo $k$ is then certified to be $0 + 1 + \cdots + (k - 1) \equiv \binom{k}{2} \pmod{k}$, without prescribing which block has which weight. Heuristically, this CNF accepts about a $\frac{k!}{k^k} \approx e^{-k}$ fraction of inputs, whereas a CNF prescribing a weight for each block only accepts a $k^{-k}$ fraction of inputs. We are setting $k = m \approx \sqrt{n}$, so such a CNF will accept an $e^{-\sqrt{n}}$ fraction of valid inputs, and we will therefore only need about $2^{O(\sqrt n)}$ of them.

To extend to certifying any weight $t$ modulo $k$, not just $\binom{k}{2}$, the symmetry of the cyclic group suggests an easy fix. Instead of working with the block weights directly, apply the distinctness constraints after \emph{shifting} each weight by fixed amounts. If we set the total shifts to be $\binom{k}{2} - t \pmod{k}$ the distinctness constraint precisely certifies $|x| \equiv t \pmod{k}$. From there, we are home free.

With this overview in mind, we now present the result formally.

\begin{theorem}
Every symmetric function on $n$ variables has depth-$3$ circuits of size $2^{O(\sqrt{n})}$. 
\end{theorem}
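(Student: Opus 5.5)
The plan is to build, for each target weight $t$, a $\Sigma_3$ circuit for $\mathsf{EXACT}_t$ of size $2^{O(\sqrt n)}$, and then take the $\OR$ over $t \in S$ as in \Cref{eq:decomp}. This costs only a factor of $n+1$ and merges into the top $\OR$ gate. $\mathsf{EXACT}_t$ is obtained from two congruence certifiers $C_k$ and $C_{k+1}$, for $|x|\equiv t \pmod k$ and $|x|\equiv t\pmod{k+1}$, with $k = \lceil \sqrt{n+1}\,\rceil$. Since $\gcd(k,k+1)=1$ and $k(k+1) > n$, the Chinese Remainder Theorem says the two congruences together force $|x| = t$ for $|x| \in \{0,\dots,n\}$. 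Each $C_m$ will be an $\OR$ of CNFs. Their $\AND$ is $\bigvee_{a,b}(A_a\wedge B_b)$, which is again an $\OR$ of CNFs, and the number of CNFs only squares.

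\textbf{Tests.} Fix a modulus $m\in\{k,k+1\}$ and split $[n]$ into $m$ blocks of size at most $\lceil n/m\rceil$; some blocks may be smaller or empty. Let $w_i$ be the weight of block $i$. For a shift vector $s\in\mathbb{Z}_m^m$ with $\sum_i s_i \equiv \binom m2 - t \pmod m$, define the test
\[
T_s(x)=\bigwedge_{i<j}\mathbf 1\bigl[w_i+s_i\not\equiv w_j+s_j \pmod m\bigr].
\]
Each conjunct is a $2\lceil n/m\rceil$-junta, so it has a CNF of size $2^{O(\sqrt n)}$. Hence $T_s$ is a CNF of size $m^2 2^{O(\sqrt n)}$.

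Soundness: if $T_s(x)=1$, the values $w_i+s_i$ are $m$ distinct residues, so they form a permutation of $\mathbb{Z}_m$. Summing gives $|x| + \sum_i s_i \equiv \binom m2$, so $|x|\equiv t \pmod m$.

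\textbf{Covering.} This is the step I expect to need the most care. Fix $x$ with $|x|\equiv t\pmod m$, and choose $s$ uniformly among admissible shift vectors: draw $s_1,\dots,s_{m-1}$ uniformly and independently, and let $s_m$ be determined by the sum constraint. Then $w_1+s_1,\dots,w_{m-1}+s_{m-1}$ are independent and uniform in $\mathbb{Z}_m$. They are pairwise distinct with probability $m!/m^{m-1}$. When they are, the sum constraint forces
\[
w_m+s_m \equiv \binom m2 - \sum_{i<m}(w_i+s_i) \pmod m,
\]
which is exactly the one missing residue. So $\Pr[T_s(x)=1] = m!/m^{m-1}\ge e^{-m}$, using $m!\ge (m/e)^m$.

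Now take $N = \lceil 2n e^{m}\rceil$ independent random tests. A given valid $x$ is missed by all of them with probability at most $(1-e^{-m})^N\le e^{-2n} < 2^{-n}$. A union bound over the at most $2^n$ inputs then yields a fixed family of $N$ tests covering every $x$ with $|x|\equiv t\pmod m$. By soundness, the $\OR$ of this family computes the congruence exactly. This gives $C_m$ as an $\OR$ of $N = 2^{O(\sqrt n)}$ CNFs, each of size $2^{O(\sqrt n)}$.

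\textbf{Assembling.} $\mathsf{EXACT}_t$ is an $\OR$ of at most $N^2$ CNFs, each of size $2^{O(\sqrt n)}$. The final circuit is an $\OR$ of at most $(n+1)N^2$ such CNFs, which is a $\Sigma_3$ circuit of total size $2^{O(\sqrt n)}$, as the theorem claims.

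The only subtle points are in the covering step. First, the determined last shift must automatically land on the missing residue; this holds precisely because $|x|\equiv t$. Second, the uneven or empty blocks cause no problem, since the argument never uses block sizes beyond the junta bound.
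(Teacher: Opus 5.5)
Your proof is correct and follows the paper's construction: the same shifted-distinctness tests modulo two consecutive moduli, the same $m!/m^{m-1}\ge e^{-m}$ color-coding bound, and the same probabilistic covering. The one difference is cosmetic. You cover each congruence separately and then distribute the $\AND$ over the two $\OR$s, which squares the number of CNFs. The paper instead $\AND$s one random test per modulus into a single combined test and covers the weight-$t$ inputs directly, using the product of the two success probabilities.
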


\begin{proof}
Since every symmetric function is the $\OR$ of at most $n + 1$ functions of the form $\textbf{1}[|x| = t]$, it suffices to construct a $\Sigma_3$ circuit ($\OR\circ\AND\circ\OR$) of size $2^{O(\sqrt{n})}$ for each such function.

The proof has two stages: first, design test CNFs of size $2^{O(\sqrt{n})}$ that certify that $x$ has weight $t$, then show that some collection of $2^{O(\sqrt{n})}$ tests covers every input of weight $t$.

\paragraph{Anatomy of a test.} As a starting point, we will construct tests that certify that $|x| \equiv t \pmod{k}$ for any fixed $k$.

Consider the following simple test. Split the input $x$ into $k$ blocks of length at most $\ceil{n/k}$, and let $w_i$ denote the weight in block $i$. Fix shifts $s_1, \dots, s_k \in \{0,1,\dots,k-1\}$, one for each block, and check whether $w_1 + s_1, \dots, w_k + s_k$ are all distinct modulo $k$. If they are, it follows that 
$$\sum_{i=1}^k (w_i + s_i) \equiv 0 + 1 + \cdots + (k-1) \equiv \binom{k}{2} \pmod{k}$$
and so 
$$\sum_{i=1}^k w_i \equiv \binom{k}{2} - \sum_{i=1}^k s_i \pmod{k}.$$
Thus, if we choose the shifts so that 
\begin{equation}\label{eq:shift-constraint}
\sum_{i=1}^k s_i \equiv \binom{k}{2} - t \pmod{k},
\end{equation}
then passing the test certifies that $|x| \equiv t \pmod{k}$.

Even though the test queries every input variable, it can be implemented as a small CNF. The reason is simply that checking the distinctness of $w_1 + s_1, \dots, w_k + s_k$ can be done two at a time, and each pair comparison depends on far fewer variables. In particular a test computes 
\begin{equation}\label{eq:test}
\textbf{1}[w_1 + s_1,\dots, w_k + s_k \text{ are distinct modulo } k] = \bigwedge_{1\le i < j \le k} \textbf{1}[w_i + s_i \not\equiv w_j + s_j \pmod k].
\end{equation} 
Each constituent function $\textbf{1}[w_i + s_i \not\equiv w_j + s_j \pmod k]$ is a $2 \ceil{n/k}$-junta, so it can be written as a CNF of size $2^{2 \ceil{n/k}}$ (see \Cref{fig:pairwise-test}). Collapsing the outer $\AND$ of this CNF with the outer $\AND$ in \Cref{eq:test}, the overall test is a CNF of size at most $\binom{k}{2} 2^{2 \ceil{n/k}}$. 

\begin{figure}[htbp]
\centering
\begingroup%
\def\pairTestBlockCount{8}%
\def\pairTestFirstBlock{2}%
\def\pairTestSecondBlock{6}%
\pgfmathsetmacro{\pairTestBlockWidth}{100/\pairTestBlockCount}%
\pgfmathsetmacro{\pairTestFirstLeft}{8+(\pairTestFirstBlock-1)*\pairTestBlockWidth}%
\pgfmathsetmacro{\pairTestSecondLeft}{8+(\pairTestSecondBlock-1)*\pairTestBlockWidth}%
\pgfmathsetmacro{\pairTestCenter}{(\pairTestFirstLeft+\pairTestSecondLeft+\pairTestBlockWidth)/2}%
\pgfmathsetlengthmacro{\pairTestUnit}{\linewidth/160}%
\definecolor{pairTestInk}{HTML}{34423C}%
\definecolor{pairTestWire}{HTML}{87928B}%
\definecolor{pairTestGreen}{HTML}{D6EBD8}%
\definecolor{pairTestOrange}{HTML}{D39450}%
\definecolor{pairTestOrangeTint}{HTML}{FFF1DF}%
\begin{tikzpicture}[
  x=\pairTestUnit,
  y=\pairTestUnit,
  font=\fontsize{11}{13}\selectfont,
  pair test/gate/.style={circle, draw=pairTestInk, text=pairTestInk, fill=white, line width=0.5pt, inner sep=0pt, outer sep=0pt, minimum size=4.2mm, font=\fontsize{10.5}{12}\selectfont},
  pair test/wire/.style={draw=pairTestWire, line width=0.35pt, -{Stealth[length=1.1mm,width=0.8mm]}, shorten <=0.3pt, shorten >=0.3pt},
  pair test/output wire/.style={draw=pairTestInk, line width=0.85pt, line cap=round},
  pair test/triangle/.style={draw=pairTestWire!45, fill=pairTestGreen!18!white, line width=0.45pt, rounded corners=2.2mm, line join=round},
  pair test/label/.style={text=pairTestInk, font=\fontsize{9.5}{11}\selectfont, inner sep=0pt},
  pair test/ellipsis/.style={text=pairTestInk, font=\fontsize{11}{13}\selectfont, inner sep=0pt}
]
  \path[use as bounding box] (0,0) rectangle (116,44);

  \path[fill=white] (8,6) rectangle (108,11);
  \foreach \block in {\pairTestFirstBlock,\pairTestSecondBlock} {
    \pgfmathsetmacro{\blockLeft}{8+(\block-1)*\pairTestBlockWidth}
    \path[fill=pairTestOrangeTint] (\blockLeft,6) rectangle ({\blockLeft+\pairTestBlockWidth},11);
    \begin{scope}
      \clip (\blockLeft,6) rectangle ({\blockLeft+\pairTestBlockWidth},11);
      \pgfmathtruncatemacro{\hatchCount}{ceil((\pairTestBlockWidth+5)/1.8)}
      \foreach \hatch in {0,...,\hatchCount} {
        \draw[draw=pairTestOrange, line width=0.3pt] ({\blockLeft-5+1.8*\hatch},6) -- ({\blockLeft+1.8*\hatch},11);
      }
    \end{scope}
  }
  \draw[draw=pairTestInk, line width=0.55pt] (8,6) rectangle (108,11);
  \pgfmathtruncatemacro{\lastDivider}{\pairTestBlockCount-1}
  \foreach \divider in {1,...,\lastDivider} {
    \draw[draw=pairTestInk, line width=0.45pt] ({8+\divider*\pairTestBlockWidth},6) -- ({8+\divider*\pairTestBlockWidth},11);
  }
  \node[pair test/label] at ({\pairTestFirstLeft+\pairTestBlockWidth/2},2.8) {block $i$};
  \node[pair test/label] at ({\pairTestSecondLeft+\pairTestBlockWidth/2},2.8) {block $j$};

  \foreach \gate/\offset in {1/-16,2/-8,3/0,4/16} {
    \node[pair test/gate] (pair-test-clause-\gate) at ({\pairTestCenter+\offset},20.3) {$\vee$};
  }
  \node[pair test/ellipsis] at ({\pairTestCenter+8},20.3) {$\cdots$};
  \node[pair test/gate] (pair-test-output) at (\pairTestCenter,31.7) {$\wedge$};
  \node[pair test/label, font=\fontsize{10.5}{12}\selectfont] at (\pairTestCenter,40.9) {$\mathbf{1}[\,w_i+s_i\not\equiv w_j+s_j\pmod{k}\,]$};
  \node[pair test/label, anchor=west, font=\fontsize{10}{12}\selectfont] at ({\pairTestCenter+17},28.5) {size $2^{O(n/k)}$};

  \begin{scope}[on background layer]
    \path[pair test/triangle] ({\pairTestCenter-27},16.6) -- ({\pairTestCenter+27},16.6) -- (\pairTestCenter,36) -- cycle;
    \foreach \fraction/\gate in {0.1/1,0.26/1,0.42/2,0.58/2,0.74/3,0.9/4} {
      \draw[pair test/wire] ({\pairTestFirstLeft+\fraction*\pairTestBlockWidth},11) -- (pair-test-clause-\gate.south);
    }
    \foreach \fraction/\gate in {0.1/1,0.26/2,0.42/3,0.58/3,0.74/4,0.9/4} {
      \draw[pair test/wire] ({\pairTestSecondLeft+\fraction*\pairTestBlockWidth},11) -- (pair-test-clause-\gate.south);
    }
    \foreach \gate in {1,...,4} {
      \draw[pair test/wire] (pair-test-clause-\gate.north) -- (pair-test-output.south);
    }
    \draw[pair test/output wire] (pair-test-output.north) -- (\pairTestCenter,37.5);
  \end{scope}
\end{tikzpicture}%
\endgroup%
\caption{Computing the function $\textbf{1}[w_i + s_i \not\equiv w_j + s_j \pmod k]$ as a small CNF, depending only on the input variables in blocks $i$ and $j$.}
\label{fig:pairwise-test}
\end{figure}

Finally, to certify that $|x| = t$, we take the $\AND$ of two independent tests (each with its own partition and fixed shifts) using $k = \ceil{\sqrt{n}}, \ceil{\sqrt{n}} + 1$. These moduli are coprime and their product is greater than $n$, so passing both tests certifies that $|x| = t$. The resulting test is a CNF of size $2^{O(\sqrt{n})}$.

\paragraph{Coverage with random tests.} Next, our goal is to show that without too many tests, we can cover every $x$ with weight $t$.

Start by considering the smaller test which certifies that $|x| \equiv t\pmod{k}$. We are interested in the probability that a valid $x$ (with weight $t$ modulo $k$) passes a test with shifts chosen uniformly at random subject to \Cref{eq:shift-constraint}. There are $k^{k-1}$ total choices of shifts, since the first $k - 1$ can be chosen arbitrarily, and the last is fixed by their sum. Of these, $k!$ certify any fixed valid $x$. To see this, choose the first $k-1$ shifts in order. Once $s_1, \dots, s_{i-1}$ have been chosen without collisions, there are $k - i + 1$ choices for $s_i$ such that $w_i + s_i$ does not collide with any $w_j + s_j$ for $j < i$. Since $x$ is valid, the constraint on the sum of the shifts then forces $w_k+s_k$ to take the one remaining residue. This gives $k\cdot(k-1)\cdots 2 = k!$ choices. Thus, the probability that a random test covers a valid $x$ is
$$\frac{k!}{k^{k-1}} \geq e^{-k}$$
by Stirling's approximation (or if you prefer, the Taylor series for $e^x$ at $x = k$). Since our test certifying $|x| = t$ uses two independent tests with $k = \ceil{\sqrt{n}}, \ceil{\sqrt{n}} + 1$, the success probability is at least 
$$p := e^{- (2\ceil{\sqrt{n}} + 1)} = 2^{-O(\sqrt{n})}.$$
It follows that if we choose $T$ tests independently and uniformly at random, then the expected number of valid inputs $x$ that do not pass any tests is at most 
$$2^n (1 - p)^T \leq 2^n \exp(-p T).$$
This expectation is less than 1 for $T \geq n/p = 2^{O(\sqrt{n})}$, so by the probabilistic method, there is a choice of at most $2^{O(\sqrt{n})}$ tests that cover every valid $x$. Taking the $\OR$ of these tests, we get a depth-3 circuit for the function $\textbf{1}[|x| = t]$ of size $2^{O(\sqrt{n})}$.
\end{proof}

We conclude the section with two brief remarks about the result and possible origins of the techniques. We thank Ryan Williams for pointing out these connections and helping us navigate the relevant literature. 

\begin{remark}\label{remark:SSETH}
\cite{GPPST24} propose an approach to faster $k$-SAT algorithms via a reduction to a certain \emph{local enumeration} problem. They show that an efficient enough algorithm for this problem would imply both the following:
\begin{enumerate}[label=(\arabic*)]
\item Expressing $\Majority$ as an $\OR$ of $k$-CNFs
requires $2^{\Omega(n \log k/k)}$ many CNFs. 
\item $k$-SAT can be solved in $2^{(1 - \Omega(\frac{\log k}{k}))n}$ randomized expected time, refuting Super Strong ETH (SSETH).
\end{enumerate}
Unfortunately, the construction above rules out (1). Switching to our notation, we can use our tests with any moduli $\{k, k+1\}$ such that $k \in [\ceil{\sqrt{n}}, n)$ (not just $\{\ceil{\sqrt{n}}, \ceil{\sqrt{n}} + 1\}$). For $k' = 2\ceil{n/k}$, we can compute $\Majority$ as an $\OR$ of $2^{O(k)} = 2^{O(n/k')}$ many $k'$-CNFs.

\end{remark}

\begin{remark}\label{remark:color-coding}
The critical idea of randomly hashing the block weights and testing for distinctness to save a $k!$ factor can be seen as an incarnation of the classic \emph{color-coding} trick of~\cite{AYZ1995}. To efficiently find a path on $k$ distinct vertices in a graph, they randomly color the vertices with $k$ colors and use dynamic programming to find a path with distinct colors. The success probability follows the same calculation that we have. In the last three decades, color-coding has become a widely used tool in parametrized algorithms since the same idea can be adapted to efficiently find many kinds of small structures. We refer the reader to~\citet[Chapter~8]{DF2013} and~\citet[Chapter~5]{CyganEtAl2015} for book chapters detailing these applications. Closer to home, a construction of \cite{Khasin1969} (as presented by \citet[Exercises~11.6 and 11.7]{Juk12} or \citet[Exercises~3.11 and 3.12]{Juk11}) gives monotone depth-3 formulas for the \emph{$t$-threshold function} ($\textbf{1}[|x| \geq t]$) of size $O(te^t n\log n)$, using an argument that can be seen as an even earlier incarnation of color-coding. The random-coloring argument at the heart of color-coding also appears as early as~\cite{ErdosKleitman1968}.

While we (frustratingly) cannot be sure where Astra's ideas came from, we suspect that the prevalence of this technique across the literature led to it being comfortably internalized by the model.

\end{remark}

\section{Depth \texorpdfstring{$d$}{d}}
\label{sec:depth-d}

Next, we turn to general depth. The key observation is that the function
$$\textbf{1}[w_i + s_i \not\equiv w_j + s_j \pmod k]$$
is almost symmetric. Since it only depends on the difference of weights between blocks $i$ and $j$, it is a symmetric function of block $i$'s input and the negation of block $j$'s input. This lets us recurse, making each test a small $\Pi_{d - 1}$ circuit rather than a CNF.

Inductively, each test now becomes a circuit of size $2^{O((n/k)^{1/(d-2)})}$, and the second part of the argument is essentially unchanged, still requiring $2^{O(k)}$ tests to cover every valid input. Balancing the exponents suggests setting $k \approx n^{1/(d-1)}$. The last difference is that using two moduli is no longer sufficient since we need their product to be greater than $n$, but taking $d - 1$ moduli does the trick.

\begin{maintheoremrestated}
For any constant $d\geq 2$, every symmetric function on $n$ variables has depth-$d$ circuits of size $2^{O(n^{1/(d-1)})}$.
\end{maintheoremrestated}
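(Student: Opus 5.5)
We prove the theorem by induction on $d$, showing that for every $d\ge 2$ and every symmetric $f$ on $n$ variables, both $f$ and $\neg f$ have $\Sigma_d$ circuits of size $2^{C_d n^{1/(d-1)}}$. By De Morgan, this is equivalent to $f$ having both $\Sigma_d$ and $\Pi_d$ circuits of that size, and such bounds may be assumed for all symmetric functions on fewer variables at depth $d-1$. The base case $d=2$ is the trivial DNF/CNF of size $2^n$, and the case $d=3$ is the previous theorem. For the inductive step, we again write $f$ as an $\OR$ of at most $n+1$ functions $\textbf{1}[|x|=t]$. The top gate is this $\OR$, merged with the $\OR$ over tests, and each test will be a $\Pi_{d-1}$ circuit, giving a $\Sigma_d$ circuit overall.

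\textbf{Tests.} Set $k_0=\lceil n^{1/(d-1)}\rceil$ and choose $d-1$ pairwise coprime moduli $k_1,\dots,k_{d-1}\in[k_0,O(k_0)]$; for instance, take distinct primes in $[k_0,2k_0]$, or consecutive integers of the form $k_0\cdot j!+1$ adjusted appropriately. Since the product exceeds $k_0^{d-1}\ge n$, congruences modulo all of them determine $|x|$. For each modulus $k$, the test splits $x$ into $k$ blocks of size at most $\lceil n/k\rceil$ and fixes shifts satisfying \Cref{eq:shift-constraint}. It computes the $\AND$ over pairs $i<j$ of $g_{ij}=\textbf{1}[w_i+s_i\not\equiv w_j+s_j \pmod k]$. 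The key observation is that $w_i-w_j=w_i+|\bar x_{B_j}|-|B_j|$, so $g_{ij}$ is a symmetric function of the $m\le 2\lceil n/k\rceil$ literals $x_{B_i}$ and $\neg x_{B_j}$. By induction at depth $d-1$, $g_{ij}$ has a $\Pi_{d-1}$ circuit of size $2^{C_{d-1} m^{1/(d-2)}}$, with negated literals simply substituted at the inputs. Taking the $\AND$ of all $\binom{k}{2}$ pairs over all $d-1$ moduli and merging with the top $\AND$ of each $\Pi_{d-1}$ circuit gives a $\Pi_{d-1}$ test of size
$$(d-1)\,O(k_0^2)\,2^{C_{d-1}(2n/k_0+2)^{1/(d-2)}} = 2^{O(n^{1/(d-1)})},$$
because $(n/k_0)^{1/(d-2)}\le (n^{(d-2)/(d-1)})^{1/(d-2)}=n^{1/(d-1)}$. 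Passing the test certifies $|x|\equiv t$ modulo each $k_\ell$, and hence $|x|=t$.

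\textbf{Coverage.} This step follows the previous theorem verbatim. For each modulus, random shifts subject to the sum constraint certify a fixed valid $x$ with probability $k!/k^{k-1}\ge e^{-k}$, and independence across the moduli gives $p\ge e^{-\sum_\ell k_\ell}=2^{-O(n^{1/(d-1)})}$. Then $T=n/p$ random tests cover all valid inputs with positive probability, and the final circuit has size $(n+1)\,T\,2^{O(n^{1/(d-1)})}=2^{O(n^{1/(d-1)})}$.

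\textbf{Main obstacle.} The delicate part is the bookkeeping in the induction. The hypothesis must supply $\Pi_{d-1}$ circuits, which comes from applying it to $\neg g_{ij}$, itself symmetric. The constants must also stay bounded: the inner exponent contributes $C_{d-1}2^{1/(d-2)}n^{1/(d-1)}$, so $C_d$ is a finite function of $d$, which is acceptable since $d$ is constant. Finally, the coprime moduli must all lie within a constant factor of $n^{1/(d-1)}$; Bertrand's postulate handles this by supplying enough primes once $n$ is large, and small $n$ is absorbed into the constant.
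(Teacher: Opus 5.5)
Your proposal is correct and follows the paper's proof essentially step for step. It uses induction on $d$, pairwise distinctness tests whose constituents are symmetric in block $i$'s literals and block $j$'s negated literals, $d-1$ pairwise coprime moduli of size $\Theta(n^{1/(d-1)})$, and the same color-coding coverage bound. The only difference is cosmetic, in the choice of moduli: the paper takes the smallest power of the $i$th prime exceeding $n^{1/(d-1)}$, whereas you take primes via Bertrand's postulate. To get $d-1$ distinct primes within $O(k_0)$, Bertrand's postulate should strictly be applied once in each of the $d-1$ successive dyadic intervals $[2^{\ell-1}k_0, 2^{\ell}k_0]$, since it gives only one prime in $[k_0, 2k_0]$.
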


\begin{proof}

We will show by induction on $d$ that every symmetric function has $\Sigma_d$ and $\Pi_d$ circuits of size $2^{O(n^{1/(d-1)})}$. Since the negation of a symmetric function is also symmetric, by De Morgan's laws it suffices to construct just $\Sigma_d$ circuits (with an $\OR$ at the output layer). As before, it suffices to construct circuits for $\textbf{1}[|x| = t]$.

The base case $d = 2$ follows from the fact that any function can be represented as a CNF or DNF of size $2^{n}$ ($\Pi_2$ and $\Sigma_2$ circuits of size $2^{n} + 1$). Henceforth, fix $d \geq 3$ and assume that any symmetric function has  $\Sigma_{d-1}$ and $\Pi_{d - 1}$ circuits of size $2^{O(n^{1/(d-2)})}$. We will show how to construct a $\Sigma_d$ circuit for $\textbf{1}[|x| = t]$ of size $2^{O(n^{1/(d-1)})}$.

The approach follows the template of the depth-3 proof. We first construct \emph{tests}, represented as small $\Pi_{d-1}$ circuits, which certify that an input has weight $t$. Then we show, via the probabilistic method, that a small set of tests is sufficient to cover every input with weight $t$.

\paragraph{Anatomy of a test.} We start by constructing small tests that certify that $|x| \equiv t \pmod k$ for any $k$ (eventually set to be $\approx n^{1/(d-1)}$). 

Once again, each test splits the input $x$ into $k$ blocks of length at most $\ceil{n/k}$, and fixes shifts $s_1, \dots, s_k \in \{0, 1, \dots, k-1\}$ satisfying 
\begin{equation}\label{eq:constraint-d}
 \sum_{i=1}^k s_i \equiv \binom{k}{2} - t \pmod{k}.   
\end{equation}
Let $w_i$ be the weight of block $i$. The test passes if $w_1 + s_1, \dots, w_k + s_k$ are all distinct modulo $k$, which implies that the weight of $x$ is $t$ mod $k$. In other words, the test computes the function 
\begin{equation}\label{eq:test-d}
\bigwedge_{1 \leq i < j \leq k} \textbf{1}[w_i + s_i \not\equiv w_j + s_j \pmod k].
\end{equation} 
The key observation is that the constituent functions $\textbf{1}[w_i + s_i \not\equiv w_j + s_j \pmod k]$ depend on at most $2\ceil{n/k}$ variables, and are \emph{nearly} symmetric functions over these variables. Since these functions only depend on the difference $w_i - w_j$, they are symmetric if we treat their input as block $i$'s variables and the \emph{negations} of block $j$'s variables. Thus, the inductive hypothesis tells us that $\textbf{1}[w_i + s_i \not\equiv w_j + s_j \pmod k]$ can be expressed as a $\Pi_{d - 1}$ circuit of size
\begin{equation}\label{eq:size}
2^{O((n/k)^{1/(d-2)})}.
\end{equation}
Finally, to get a test for $|x| = t$, we combine separate tests (each with its own partition and shifts) over several moduli $k_1, k_2, \dots, k_{d - 1}$. If the moduli are all pairwise coprime and greater than $n^{1/(d-1)}$, their product is greater than $n$, and so
$$\bigwedge_{i = 1}^{d-1} \textbf{1}[|x| \equiv t \pmod{k_i}] = \textbf{1}[|x| = t].$$
For the sake of the next step, which requires a reasonably large probability that a fixed valid input passes a random test, we need moduli that are not too much larger than $n^{1/(d-1)}$. We can arrange this by making $k_i$ the smallest power of the $i$th prime $p_i$ that is greater than $n^{1/(d-1)}$, in which case $k_i \leq p_i n^{1/(d-1)} = O(n^{1/(d-1)})$. For any such $k$, the bound given by \Cref{eq:size} is $2^{O(n^{1/(d-1)})}$.  

Ultimately, the test for $|x| = t$ is the $\AND$ of polynomially many circuits of this size. We can collapse their output $\AND$ gates, so the overall test is implemented as a $\Pi_{d - 1}$ circuit of size $2^{O(n^{1/(d-1)})}$.

\paragraph{Coverage with random tests.} Next, take a fixed input $x$ with weight $t$, and suppose that we choose a random test for $|x| = t$ by choosing shifts for each modulus uniformly subject to the constraint in \Cref{eq:constraint-d}, independently across moduli. By the same argument as in the depth-3 case, a given valid $x$ passes the test for a modulus $k$ with probability at least $e^{-k}$, and so it passes the tests for every modulus with probability at least 
$$p := e^{-\sum_i k_i} = 2^{-O(n^{1/(d-1)})}.$$
With $T$ independently chosen random tests, the expected number of valid inputs $x$ that do not pass any tests is at most
$$2^n(1 - p)^T \le 2^n \exp(-p T)$$
which is less than $1$ for  $T \geq n/p = 2^{O(n^{1/(d-1)})}$. Thus, we can compute $\textbf{1}[|x| = t]$ with the $\OR$ of $2^{O(n^{1/(d-1)})}$ tests, each of which is a $\Pi_{d - 1}$ circuit of size $2^{O(n^{1/(d-1)})}$, giving a $\Sigma_d$ circuit of size $2^{O(n^{1/(d-1)})}$. 
\end{proof}

\section{Discussion}
\label{sec:discussion}

The results in this paper and the methods used to obtain them can be taken as cause for pessimism or optimism. 

On the one hand, now that $\Majority$ is no longer the simplest function whose circuit complexity is poorly understood, it seems less likely that circuit lower bound breakthroughs will come from studying simple functions. While $\Majority$ served as a tractable sandbox for developing new circuit lower bound techniques, the next sandbox may involve pricklier toys that are less amenable to elegant theory. 

This view makes the road to understanding circuit complexity seem long and uncertain, but \Cref{thm:main} can also be interpreted as showing that we were actually further along than we thought. Our inability to improve circuit lower bounds for $\Majority$ seemed like an indicator of how limited our techniques were, but it turns out that the switching lemma gave the right bound all along. More broadly, we are hopeful that new, powerful theorem proving tools can enable us to revisit research programs that stalled on thorny old open problems, and continue to find simple resolutions that were hiding just beneath the surface.

\paragraph{Acknowledgments.} Thanks to Li-Yang Tan for introducing us to the problem, and for many stimulating conversations about it. We also thank Ryan Williams for alerting us to the connection to SSETH from by \cite{GPPST24}, and the importance of the color-coding technique of \cite{AYZ1995} in the FPT algorithms literature (discussed in \Cref{remark:SSETH,remark:color-coding} respectively).

\nocite{LRT22}
\bibliographystyle{plainnat3}
\bibliography{references}

\end{document}